\theoremstyle{plain}
\newtheorem{theorem}{Theorem}
\newtheorem{lemma}[theorem]{Lemma}
\theoremstyle{definition}
\newcommand{\G}{{\mathcal G}}
\newcommand{\bM}{{\boldsymbol{M}}} 
\newcommand{\bN}{{\boldsymbol{N}}}
\newcommand{\bP}{{\boldsymbol P}}
\newcommand{\bu}{{\boldsymbol u}}
\newcommand{\bx}{{\boldsymbol{x}}}
\newcommand{\bG}{{\boldsymbol{G}}}
\newcommand{\fq}{\mathbb{F}_q}
\newcommand{\vep}{\varepsilon}
\newcommand{\de}{\delta}
\newcommand{\De}{\Delta}
\newcommand{\supp}{{\sf supp}}
\newcommand{\cG}{{\mathscr{G}}}
\newcommand{\cV}{{\mathscr{V}}}
\newcommand{\cE}{{\mathscr{E}}}
\newcommand{\cH}{{\mathscr{H}}}
\newcommand{\nin}{\noindent}
\newcommand{\seq}{\subseteq}
\newcommand{\vM}{{{\sf var}}(\boldsymbol{M})}
\newcommand{\gM}{{{\sf gr}}(\boldsymbol{M})}
\newcommand{\sG}{{{\sf supp}}(\boldsymbol{G})}
\newcommand{\jmax}{j_{\max}}
\newcommand{\jmin}{j_{\min}}
\newcommand{\bMi}{\boldsymbol{M}^{(i)}}
\newcommand{\Ri}{R^{(i)}}
\newcommand{\Ro}{R^{(1)}}
\newcommand{\Rt}{R^{(2)}}
\begin{document}
\pagestyle{plain}

\title{Balanced Sparsest Generator Matrices for MDS Codes}

\begin{comment}
 \author{
   \IEEEauthorblockN{
     Son Hoang Dau\IEEEauthorrefmark{1},
     Wentu Song\IEEEauthorrefmark{2},
     Zheng Dong\IEEEauthorrefmark{3}, 
     Chau Yuen\IEEEauthorrefmark{4}}
   \IEEEauthorblockA{
     Singapore University of Technology and Design, Singapore\\     
		\IEEEauthorrefmark{1}Email: {\it sonhoang\_dau@sutd.edu.sg}\\
    \IEEEauthorrefmark{2}Email: {\it wentu\_song@sutd.edu.sg}\\
    \IEEEauthorrefmark{3}Email: {\it dong\_zheng@sutd.edu.sg}\\
     \IEEEauthorrefmark{4}Email: {\it yuenchau@sutd.edu.sg}
 }
}
\end{comment}
\author{
   \IEEEauthorblockN{
     Son Hoang Dau\IEEEauthorrefmark{1},
     Wentu Song\IEEEauthorrefmark{2},
     Zheng Dong\IEEEauthorrefmark{3}, 
     Chau Yuen\IEEEauthorrefmark{4}}
   \IEEEauthorblockA{
     Singapore University of Technology and Design, Singapore\\     
		Emails: $\{$\IEEEauthorrefmark{1}{\it sonhoang\_dau},     
		\IEEEauthorrefmark{2}{\it wentu\_song},
    \IEEEauthorrefmark{3}{\it dong\_zheng},
     \IEEEauthorrefmark{4}{\it yuenchau}$\}$@sutd.edu.sg
 }
}
\maketitle

\begin{abstract}
We show that given $n$ and $k$, for $q$ sufficiently large, there
always exists an $[n, k]_q$ MDS code that has a generator matrix $\bG$ satisfying
the following two conditions: 
\begin{enumerate}
\item[(C1)] Sparsest: each row of $\bG$ has Hamming weight $n - k + 1$; 
\item[(C2)] Balanced: Hamming weights of the columns of $\bG$ differ from each other by at most one.
\end{enumerate}
\end{abstract}

\section{Introduction}
\label{sec:intro}

We study the existence and provide a construction of a \emph{sparsest} and \emph{balanced} 
generator matrix of Maximum Distance Separable (MDS) codes. 
A generator matrix is the sparsest if it contains
the least number of nonzero entries among all generator matrices of the same MDS code. 
A generator matrix is balanced if every column contains approximately the same
number of nonzero entries. More specifically, we require that the number of nonzero
entries in each column differs from each other by at most one. 

Apart from being of theoretical interest, our study on balanced sparsest generator matrices for
MDS codes was motivated by its application in error correction for sensor networks.
Suppose $n$ sensors, $S_1$, \ldots, $S_n$, collectively measure $k$ conditions 
$x_1, \ldots, x_k$, such as temperature, pressure, light intensity, etc.
Let $\bx = (x_1, \ldots, x_k)$, where $x_i \in \fq$ for each $i = 1,\ldots,k$ ($\fq$ is a finite field of $q$ elements). 
These sensors transmit the information they collected to a base station, which is a data collector.  
Furthermore, each sensor performs 
some encoding on the information it has, before transmitting the information back to the 
base station in the following way. Let $\bG$ be an $k \times n$ generator matrix of 
an $[n, k, d]_q$ error-correcting code. Sensor $S_i$
transmits the scalar product of $\bx$ and column $i$ of $\bG$ to the base station. 
It is well known in classical coding theory that this coding scheme allows the base station
to retrieve $\bx$ when at most $\lfloor \frac{d-1}{2}\rfloor$ sensors transmit wrong
information. Moreover, the base station can also identify the malfunctioned sensors.
For each sensor $S_i$, only those conditions corresponding to nonzero entries of column 
$i$ of $G$ are involved into encoding. So it is sufficient for $S_i$ to measure only such conditions. Thus, if $G$ is sparse then in average, each sensor only needs
to measure a few among $k$ conditions in order to achieve the desired error correction capability. 
On top of that, if columns
of $\bG$ have approximately the same number of nonzero entries then the sensors are required to
measure approximately the same number of conditions. This balance guarantees an even distribution
of workload among sensors, which is an important criterion for sensor networks
where energy saving is a critical issue. 

In fact, any error-correcting code can be used in the aforementioned scheme for
sensor networks. We choose to study MDS codes first because their structure, especially
their weight distribution, is well studied (see, for instance~\cite[Ch. 11]{MW_S}).
Moreover, they have optimal error-correcting capability, given the length and the 
dimension. 
We prove that over a sufficiently large field, there always exists an MDS code that
has a balanced and sparsest generator matrix, which is ideally suitable for the above encoding
scheme for sensor networks. 

Necessary notations and definitions are provided
in Section~\ref{sec:pre}. We state and prove our main result in Section~\ref{sec:main}. 

\section{Preliminaries}
\label{sec:pre}
 
We denote by $\fq$ the finite field with $q$ elements. 
Let $[n]$ denote the set $\{1,2,\ldots,n\}$. 
The \emph{support} of a vector $\bu = (u_1, \ldots, u_n) \in \fq^n$ is defined by
$\supp(\bu) = \{i \in [n]:\ u_i \neq 0\}$.
The (Hamming) \emph{weight} of $\bu$ is $|\supp(\bu)|$. 
We can also define weight and support of a row or a column of a matrix over some finite
field, by regarding them as vectors over that field. 
Apart from Hamming weight, we also use other standard notions from coding theory such as minimum distance, linear $[n,k]_q$ and $[n,k,d]_q$ codes, MDS codes, and generator matrices (for instance, see \cite{MW_S}). 
 
For a matrix $\bG = (g_{i,j}) \in \fq^{k \times n}$, the \emph{support matrix} of $\bG$, 
denoted $\sf{supp}(\bG)$, is a $k \times n$ binary matrix $\bM = (m_{i,j})$ 
where $m_{i,j} = 0$ if $g_{i,j} = 0$ and $m_{i,j} = 1$ if $g_{i,j} \neq 0$.  
%\[
%m_{i,j} = 
%\begin{cases}
%0, & \text{ if } g_{i,j} = 0,\\
%1, & \text{ if } g_{i,j} \neq 0.
%\end{cases}
%\]  
Let $\bM = (m_{i,j})$ be a $k \times n$ binary matrix. 
We denote by $\vM =(v_{i,j})$ 
the matrix obtained from $\bM$ by replacing every nonzero entry $m_{i,j}=1$ by
$\xi_{i,j}$, where $\xi_{i,j}$'s are indeterminates. More formally, 
$v_{i,j} = 0$ if $m_{i,j} = 0$ and $v_{i,j} = \xi_{i,j}$ if $m_{i,j} = 1$. 
%\[
%v_{i,j} = 
%\begin{cases}
%0, & \text{ if } m_{i,j} = 0,\\
%\xi_{i,j}, & \text{ if } m_{i,j} = 1.
%\end{cases}
%\]  
We also denote by $\sf{gr}(\bM)$ the bipartite graph $\cG = (\cV, \cE)$
defined as follows. The vertex set $\cV$ can be partitioned into two parts, 
namely, the left part $L = \{\ell_1, \ldots, \ell_k\}$, and the right part 
$R = \{r_1, \ldots, r_n\}$. The edge set is
\[
\cE = \big\{ (\ell_i, r_j):\ i \in [k], \ j \in [n], \ m_{i,j} \neq 0 \big\}. 
\]
For any $k \times n$ matrix $\bN$, we define 
$f(\bN) = \prod_{\bP} \det(\bP)$,
where the product is taken over all $\binom{n}{k}$ submatrices $\bP$ of order $k$ of $\bN$.  

\section{Main Result}
\label{sec:main}

A sparsest generator matrix of an $[n, k]_q$ MDS code would have precisely $n - k + 1$ nonzero entries in every row. Moreover, if it is balanced, then each column contains either $\lfloor \frac{k(n-k+1)}{n}\rfloor$ or $\lceil \frac{k(n-k+1)}{n}\rceil$ nonzero entries.  
Hereafter, we often use $R_i$, $i \in [k]$, and $C_j$, $j \in [n]$, to denote
the supports of row $i$ and column $j$, respectively, of a 
$k \times n$ binary matrix $\bM$. Note that $R_i \seq [n]$ and 
$C_j \seq [k]$. 

\vskip 5pt 
\begin{lemma}
\label{lem:1}
Let $\bM = (m_{i,j})$ be a $k \times n$ binary matrix.
Suppose that each row of $\bM$ has weight $n - k + 1$. 
Then $\bM$ is the support matrix of a generator matrix of some $[n,k]_q$ MDS code over 
a sufficiently large field $\fq$ ($q > \binom{n-1}{k-1}$) if and only if $f(\vM) \not\equiv 0$.
\end{lemma}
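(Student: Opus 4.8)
The plan is to prove both directions of the equivalence by relating the MDS property of a matrix $\bG$ with support matrix $\bM$ to the non-vanishing of the $k \times k$ minors of $\bG$, and then to the polynomial $f(\vM)$.

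First, recall that a $k \times n$ matrix $\bG$ over $\fq$ (with $k \le n$) is a generator matrix of an $[n,k]_q$ MDS code if and only if every $k \times k$ submatrix of $\bG$ is nonsingular, i.e. $f(\bG) \neq 0$. So the statement we must establish is: there exists an assignment of the indeterminates $\xi_{i,j}$ (for $(i,j)$ with $m_{i,j}=1$) to values in $\fq$ such that the resulting matrix $\bG$ has $f(\bG) \neq 0$, if and only if $f(\vM) \not\equiv 0$ as a polynomial, \emph{provided} $q > \binom{n-1}{k-1}$. The ``only if'' direction is immediate: if $f(\vM) \equiv 0$ as a formal polynomial, then every assignment yields $f(\bG) = 0$, so no choice of $\bG$ with support $\bM$ can be MDS. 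For the ``if'' direction, the idea is a standard application of the Schwartz–Zippel lemma (or the Combinatorial Nullstellensatz): if $f(\vM)$ is a nonzero polynomial over $\fq$, then as long as $q$ exceeds the degree of $f(\vM)$ in each variable, there is a point in $\fq^{|\supp(\bM)|}$ at which $f(\vM)$ does not vanish; plugging in that point gives a matrix $\bG$ with support exactly $\bM$ (nonzero where $m_{i,j}=1$ — this is why we want the evaluation point to avoid the zero locus of $f$, which in particular forces each $\xi_{i,j}$ appearing to be nonzero, since each $\xi_{i,j}$ divides... actually one must be slightly careful here) and $f(\bG) \neq 0$, hence $\bG$ generates an MDS code.

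The key quantitative step, and the place where the hypothesis on $q$ enters, is bounding the degree of $f(\vM)$ in each individual indeterminate $\xi_{i,j}$. Fix a variable $\xi_{i,j}$. It appears in $\det(\bP)$ for a submatrix $\bP$ only if $\bP$ uses column $j$ of $\vM$, and in that determinant it appears to degree at most $1$ (a determinant is multilinear in the entries). The number of order-$k$ submatrices $\bP$ that include column $j$ and whose row set is forced: since every row is present (there are exactly $k$ rows), $\bP$ is determined by its choice of $k$ columns, one of which is $j$, so there are $\binom{n-1}{k-1}$ such submatrices. Moreover, within such a $\bP$, the entry in position $(i,j)$ is nonzero (equal to $\xi_{i,j}$) only when $j \in R_i$; in the expansion of $\det(\bP)$ the variable $\xi_{i,j}$ appears with degree at most $1$. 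Therefore the degree of $f(\vM) = \prod_{\bP} \det(\bP)$ in $\xi_{i,j}$ is at most $\binom{n-1}{k-1}$. Since $q > \binom{n-1}{k-1}$, the multivariate Schwartz–Zippel / Combinatorial Nullstellensatz argument applies: a polynomial whose degree in each variable is at most $t$ and which is not identically zero cannot vanish on all of $\fq^N$ when $q > t$. Hence a suitable evaluation point exists.

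The main obstacle, I expect, is the bookkeeping needed to guarantee that the chosen evaluation point not only makes $f(\bG) \neq 0$ but also produces a matrix whose support is \emph{exactly} $\bM$ rather than a subpattern of it — i.e., that none of the coordinates $\xi_{i,j}$ that are supposed to be nonzero accidentally get assigned $0$. One clean way around this is to observe that if $\xi_{i,j} = 0$ for some $(i,j)$ with $m_{i,j} = 1$, then $\bG$ has a row of weight strictly less than $n-k+1 = d$, which would mean $\bG$ cannot be a generator matrix of a code with minimum distance $d$ at all — so automatically $f(\bG) = 0$ for such degenerate points. Thus the non-vanishing of $f$ at the chosen point already forces all the designated entries to be nonzero, and the support of $\bG$ is exactly $\bM$; this dovetails neatly with the row-weight hypothesis in the lemma. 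A second minor point to handle carefully is that $f(\vM) \not\equiv 0$ must be interpreted over $\fq$ (equivalently over $\bbZ$, since the coefficients of $f(\vM)$ are integers): one should note the distinction is harmless because a nonzero integer polynomial whose every-variable degree is below $q$ remains nonzero mod the characteristic — which again is exactly what the bound $q > \binom{n-1}{k-1}$ buys us.
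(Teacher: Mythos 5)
Your proof is correct and follows essentially the same route as the paper: both directions rest on the equivalence of MDS with all $k \times k$ minors being nonzero, the per-variable degree bound $\binom{n-1}{k-1}$ combined with a sparse-zeros lemma (the paper cites Ho \emph{et al.} where you invoke Schwartz--Zippel/Combinatorial Nullstellensatz) to produce a non-vanishing evaluation point, and the Singleton bound to force the support of the resulting $\bG$ to be exactly $\bM$. The only loose end is your closing parenthetical about integer versus $\fq$ coefficients: the general claim there is false as stated (the constant polynomial $p$ is a counterexample), but it is also unnecessary, since $f(\vM)$ can simply be regarded as an element of $\fq[\ldots,\xi_{i,j},\ldots]$ from the outset, exactly as the paper does.
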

\vspace{-3pt}
\begin{proof} 
Suppose $\bM = \sG$, where $\bG = (g_{i,j})$ is a generator matrix of some $[n,k]_q$ MDS code. 
Due to a well-known property of MDS codes (see \cite[p. 319]{MW_S}), every submatrix
of order $k$ of $\G$ has nonzero determinant. Therefore, $f(\bG) \neq 0$. 
Note that $f(\vM)$ can be regarded as a multivariable polynomial in 
$\fq[\ldots, \xi_{i,j}, \ldots]$. Moreover, since $\bM = \sG$, we deduce that
$f(\bG)$ can be obtained from $f(\vM)$ by substituting $\xi_{i,j}$ by $g_{i,j}$ 
for all $i,j$ where $g_{i,j} \neq 0$. As $f(\bG) \neq 0$, we conclude that $f(\vM) \not\equiv 0$.   

Now suppose that $f(\vM) \not\equiv 0$.
Note that each column of $\vM$ belongs to precisely $\binom{n-1}{k-1}$
submatrices of order $k$ of $\vM$. Hence the exponent of each $\xi_{i,j}$
in $f(\vM)$ is at most $\binom{n-1}{k-1}$. Since $f(\vM) \not\equiv 0$, 
by \cite[Lemma 4]{Ho2006}, if $q > \binom{n-1}{k-1}$ then there exist
$g_{i,j} \in \fq$ (for $i,j$ where $m_{i,j} = 1$) so that 
$f(\vM)(\ldots, g_{i,j} , \ldots) \neq 0$.  
Let $\bG = (g_{i,j})$ (for $i,j$ where $m_{i,j} = 0$ we set $g_{i,j} = 0$).
Since 
$
f(\bG) =  f(\vM)(\ldots, g_{i,j}, \ldots) \neq 0$,
again by \cite[p. 319]{MW_S}, 
we deduce that $\bG$ is a generator matrix of an $[n, k]_q$ MDS code. 
Therefore, each row of $\bG$ has weight at least $n - k + 1$, due to the 
Singleton Bound (see \cite[p. 33]{MW_S}). 
Since each row of $\bM$ also has weight $n - k + 1$, we deduce that 
$g_{i,j} \neq 0$ whenever $m_{i,j} = 1$. Therefore, $\bM = \sG$.  
\end{proof} 
\vskip 5pt 

\begin{lemma}
\label{lem:2}
Let $\bM = (m_{i,j})$ be a $k \times n$ binary matrix.
Then $f(\vM) \not\equiv 0$ if and only if every bipartite subgraph induced by 
the $k$ left-vertices and some $k$ right-vertices in $\gM$ has a perfect matching.  
\end{lemma}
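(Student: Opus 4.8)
The statement to prove is a clean if-and-only-if relating non-vanishing of $f(\vM)$ to a combinatorial matching condition on $\gM$. The key structural observation is that $f(\vM) = \prod_{\bP} \det(\bP)$ is a product over all order-$k$ submatrices $\bP$ of $\vM$, so $f(\vM) \not\equiv 0$ as a polynomial in the indeterminates $\xi_{i,j}$ if and only if \emph{each individual factor} $\det(\bP) \not\equiv 0$. (A product of polynomials over an integral domain such as $\fq[\ldots,\xi_{i,j},\ldots]$ is nonzero iff every factor is nonzero.) So the whole thing reduces to a single-submatrix claim: for a fixed choice $J$ of $k$ columns, $\det(\vM[J]) \not\equiv 0$ iff the bipartite subgraph on the $k$ left-vertices and the right-vertices $\{r_j : j \in J\}$ has a perfect matching.

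\medskip
\textbf{Main steps.} First I would record the product-to-factors reduction above, noting carefully that the $\xi_{i,j}$ are \emph{distinct} indeterminates — this is what makes the Leibniz expansion of $\det(\vM[J])$ behave well. Second, for a fixed $k$-subset $J \subseteq [n]$ of columns, write the Leibniz formula
\[
\det(\vM[J]) = \sum_{\sigma} \mathrm{sgn}(\sigma) \prod_{i=1}^{k} v_{i,\sigma(i)},
\]
where $\sigma$ ranges over bijections from the row index set $[k]$ to $J$. Each term is either $0$ (if some $v_{i,\sigma(i)} = 0$, i.e. some $m_{i,\sigma(i)} = 0$) or a product $\prod_i \xi_{i,\sigma(i)}$ of $k$ distinct indeterminates. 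Third, observe the bijection between nonzero terms and perfect matchings: a term for $\sigma$ is nonzero iff $m_{i,\sigma(i)} = 1$ for all $i$, iff $\{(\ell_i, r_{\sigma(i)}) : i \in [k]\}$ is a perfect matching in the induced subgraph. Fourth — the crucial non-cancellation point — since the indeterminates are distinct, distinct bijections $\sigma$ give distinct monomials $\prod_i \xi_{i,\sigma(i)}$, so no two nonzero terms in the Leibniz sum can cancel. Hence $\det(\vM[J]) \not\equiv 0$ iff at least one term survives iff the induced subgraph has a perfect matching. Combining with the product reduction: $f(\vM) \not\equiv 0$ iff $\det(\vM[J]) \not\equiv 0$ for every $k$-subset $J$ iff every such induced subgraph has a perfect matching.

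\medskip
\textbf{Where the difficulty lies.} There is no deep obstacle here; the proof is essentially bookkeeping. The one place to be careful — and the step I would state most explicitly — is the non-cancellation argument in the Leibniz sum. It is tempting to invoke it as "obvious," but it genuinely uses that the nonzero entries of $\vM$ are \emph{algebraically independent} indeterminates rather than arbitrary field elements (over an actual field, cancellation between matchings can and does occur, which is precisely why Lemma~\ref{lem:1} needs a large field to find a good specialization). A secondary minor point is making sure the indexing in the Leibniz formula — bijections $[k] \to J$ versus permutations of $[k]$ composed with the inclusion $J \hookrightarrow [n]$ — is set up so that "nonzero term" corresponds exactly to "perfect matching of the induced bipartite subgraph," with no double counting. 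Once these two points are pinned down, both directions of the equivalence follow immediately.
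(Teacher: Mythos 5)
Your proof is correct and follows essentially the same route as the paper: factor $f(\vM)$ into the individual order-$k$ determinants and identify each with the Edmonds matrix of the corresponding induced bipartite subgraph. The only difference is that the paper simply cites the standard Edmonds-matrix/perfect-matching equivalence (\cite[p.~167]{Motwani1995}), whereas you prove it inline via the Leibniz expansion and the non-cancellation of distinct monomials; both are fine.
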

\vspace{-3pt}
\begin{proof}
Let $\cG = \gM$. Each submatrix $\bP$ of order $k$ of $\vM$ corresponds to a bipartite 
subgraph $\cH_{\bP}$ induced by the $k$ left-vertices and some $k$ right-vertices in $\cG$. 
In the literature, $\bP$ is usually referred to as the \emph{Edmonds matrix} of $\cH_{\bP}$. 
It is well known (see \cite[p. 167]{Motwani1995}) that a bipartite graph has a perfect matching if and only if 
the deteminant of its Edmonds matrix is not identically zero. Hence the proof follows.  
\end{proof} 

\begin{lemma}
\label{lem:3}
Let $\bM = (m_{i,j})$ be a $k \times n$ binary matrix.
Then every bipartite subgraph induced by the $k$ left-vertices and some $k$ right-vertices in $\gM$
has a perfect matching if and only if 
\begin{equation}
\label{eq:1} 
\big|  \cup_{j \in J} C_j\big| \geq |J|, \text{  for every subset } J \subseteq [n], \ |J| \leq k.
\end{equation} 
\end{lemma}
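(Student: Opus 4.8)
The plan is to derive both directions from Hall's marriage theorem, using the elementary fact that a bipartite graph whose two parts both have size $k$ has a perfect matching as soon as some matching saturates one of the parts. First I would fix notation: write $\cG = \gM$ with left vertices $L=\{\ell_1,\dots,\ell_k\}$ and right vertices $\{r_1,\dots,r_n\}$, and for $J\seq[n]$ let $N(J)\seq L$ be the set of left vertices adjacent in $\cG$ to some $r_j$ with $j\in J$. Since $\ell_i r_j$ is an edge of $\gM$ exactly when $m_{i,j}\neq 0$, i.e.\ when $i\in C_j$, the set $N(J)$ is just $\cup_{j\in J}C_j$; thus \eqref{eq:1} is precisely Hall's condition on the right-hand side of $\cG$, restricted to sets of size at most $k$. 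That identification is the only real content; the rest is bookkeeping.

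For the ``if'' direction I would assume \eqref{eq:1}, fix an arbitrary $k$-subset $S\seq[n]$, and consider the subgraph $\cH_S$ of $\cG$ induced by $L$ and $\{r_j:j\in S\}$. Because passing to this induced subgraph only deletes right vertices, for every $J\seq S$ one has $N_{\cH_S}(J)=N(J)=\cup_{j\in J}C_j$, and $|J|\le|S|=k$, so \eqref{eq:1} yields $|N_{\cH_S}(J)|\ge|J|$. Hall's theorem then provides a matching of $\cH_S$ saturating $\{r_j:j\in S\}$, and since this part has size $k=|L|$, the matching is perfect. Hence every such induced subgraph has a perfect matching.

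For the converse I would assume that every subgraph induced by $L$ and some $k$ right vertices has a perfect matching, fix any $J\seq[n]$ with $|J|\le k$, and enlarge $J$ to a set $S$ with $J\seq S\seq[n]$ and $|S|=k$ (possible as $k\le n$). Taking a perfect matching $\mu$ of $\cH_S$, each $r_j$ with $j\in J$ is matched to some $\ell_{\sigma(j)}$ with $\sigma(j)\in C_j$, and the indices $\sigma(j)$ are pairwise distinct; hence $\cup_{j\in J}C_j$ contains the $|J|$ distinct elements $\sigma(j)$, which gives \eqref{eq:1}.

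I do not expect a genuine obstacle here: the proof is a two-sided invocation of Hall's theorem. The only points that need care are applying Hall's condition to the right vertices (the columns) rather than the left ones, and noticing that equality of the two part sizes turns ``saturates one side'' into ``perfect matching'', so that the familiar restriction/extension trick matches \eqref{eq:1} exactly to the stated matching property.
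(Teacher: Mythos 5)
Your proof is correct and follows essentially the same route as the paper: the paper's proof simply notes that each induced subgraph on the $k$ left-vertices and $k$ right-vertices is handled by Hall's marriage theorem, which is exactly what you carry out, with the details (the identification $N(J)=\cup_{j\in J}C_j$, the restriction/extension of $J$ to a $k$-set, and the observation that saturating one side of equal-sized parts gives a perfect matching) spelled out explicitly.
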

\vspace{-3pt}
\begin{proof}
Let $\cG = \gM$. Each submatrix $\bP$ of order $k$ of $\vM$ corresponds to a bipartite 
subgraph $\cH_{\bP}$ induced by the $k$ left-vertices and some $k$ right-vertices in $\cG$. 
The lemma follows by applying Hall's marriage theorem to each of such subgraphs
of $\gM$.  
\end{proof} 

\begin{lemma}
\label{lem:4}
Let $\bM = (m_{i,j})$ be a $k \times n$ binary matrix.
The condition (\ref{eq:1}) is equivalent to 
\vspace{-3pt}
\begin{equation} 
\label{eq:2}
\big| \cup_{i \in I} R_i \big| \geq n - k + |I|, \text{  for every subset } \varnothing \neq I \subseteq [k]. 
\end{equation} 
\end{lemma}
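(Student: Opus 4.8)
The plan is to prove the equivalence by showing that the \emph{negation} of (\ref{eq:1}) is equivalent to the negation of (\ref{eq:2}), via a complementation argument interchanging sets of rows and sets of columns. The whole proof rests on the trivial observation $i \in C_j \Leftrightarrow m_{i,j}=1 \Leftrightarrow j \in R_i$.

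First I would record the basic row--column duality: for any $S \seq [k]$,
$
\{ j \in [n] : C_j \seq S \} = [n] \setminus \big( \cup_{i \in [k]\setminus S} R_i \big),
$
since $C_j \not\seq S$ means some $i \notin S$ has $i \in C_j$, i.e. $j \in R_i$. Taking cardinalities, $|\{ j : C_j \seq S \}| = n - |\cup_{i \in [k]\setminus S} R_i|$.

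Next I would show that (\ref{eq:1}) \emph{fails} if and only if there exists $S \seq [k]$ with $S \neq [k]$ and $|\{ j : C_j \seq S \}| \ge |S|+1$. For the forward direction, if $J$ with $|J| \le k$ violates (\ref{eq:1}), put $S = \cup_{j \in J} C_j$; then $|S| \le |J|-1 \le k-1$, every $j \in J$ satisfies $C_j \seq S$, so $\{ j : C_j \seq S\} \supseteq J$ has size $\ge |J| \ge |S|+1$. Conversely, given such an $S$, choose any $J \seq \{ j : C_j \seq S\}$ with $|J| = |S|+1$; then $|J| \le k$ and $\cup_{j \in J} C_j \seq S$, hence $|\cup_{j \in J} C_j| \le |S| = |J|-1$, so (\ref{eq:1}) fails.

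Finally I would substitute $I = [k] \setminus S$: as $S$ ranges over proper subsets of $[k]$, $I$ ranges over all nonempty subsets of $[k]$, with $|S| = k - |I|$. By the duality identity, the condition $|\{ j : C_j \seq S \}| \ge |S|+1$ becomes $n - |\cup_{i \in I} R_i| \ge k - |I| + 1$, i.e. $|\cup_{i \in I} R_i| \le n - k + |I| - 1$, which is exactly the negation of (\ref{eq:2}). Thus (\ref{eq:1}) fails iff (\ref{eq:2}) fails, giving the desired equivalence. The argument is essentially bookkeeping; the only delicate point is matching the constraint $|J| \le k$ in (\ref{eq:1}) with the constraint $I \neq \varnothing$ in (\ref{eq:2}), which is handled by reducing any violating $J$ to one of size exactly $|S|+1 \le k$.
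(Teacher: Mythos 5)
Your proof is correct and rests on the same idea as the paper's: a complementation duality between row supports and column supports, where a violating set $J$ of columns for (\ref{eq:1}) corresponds to a violating nonempty set $I=[k]\setminus S$ of rows for (\ref{eq:2}). The only difference is organizational — you derive both directions from a single chain of equivalences via the identity $|\{j: C_j\seq S\}| = n - |\cup_{i\in[k]\setminus S}R_i|$, whereas the paper argues one direction by contradiction and declares the converse ``similar.''
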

\vspace{-5pt}
\begin{proof}
Suppose that (\ref{eq:1}) holds and that there exists a nonempty set $I \subseteq [k]$ satisfying
\begin{equation} 
\label{eq:3}
\big| \cup_{i \in I} R_i \big| \leq n - k + |I| - 1. 
\end{equation}
We aim to obtain a contradiction. 
The condition (\ref{eq:3}) is equivalent to
\begin{equation} 
\label{eq:4}
\big| \cap_{i \in I} \overline{R_i} \big| \geq k - |I| + 1.
\end{equation} 
Hence there exists a set $J$ of $k - |I| + 1$ columns of $\bM$ that satisfies 
$
\big|  \cap_{j \in J} \overline{C_j}\big| \geq |I|$.  
Equivalently we have
\begin{equation} 
\label{eq:5}
\big|  \cup_{j \in J} C_j\big| \leq k - |I| < k - |I| + 1 = |J|. 
\end{equation} 
We obtain a contradiction between (\ref{eq:1}) and (\ref{eq:5}). 
The ``only if" direction can be proved in a similar manner. 
\end{proof} 

\begin{lemma}
\label{lem:main1}
Let $\bM = (m_{i,j})$ be a $k \times n$ binary matrix.
Suppose that each row of $\bM$ has weight $n - k + 1$. 
Then $\bM$ is the support matrix of a generator matrix of some $[n,k]_q$ MDS code over 
a sufficiently large field $\fq$ ($q > \binom{n-1}{k-1}$) if and only if (\ref{eq:2}) holds. 
\end{lemma}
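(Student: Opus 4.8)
The plan is to obtain Lemma~\ref{lem:main1} directly by composing the chain of equivalences established in Lemmas~\ref{lem:1}--\ref{lem:4}; no new ideas are needed. Under the standing hypothesis that every row of $\bM$ has weight $n-k+1$, Lemma~\ref{lem:1} already asserts that $\bM$ is the support matrix of a generator matrix of some $[n,k]_q$ MDS code over a field with $q > \binom{n-1}{k-1}$ if and only if $f(\vM) \not\equiv 0$. So it suffices to show that $f(\vM)\not\equiv 0$ is equivalent to condition~(\ref{eq:2}), and I would do this by invoking the remaining three lemmas in sequence.

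Concretely: Lemma~\ref{lem:2} converts the algebraic condition $f(\vM)\not\equiv 0$ into the combinatorial statement that every bipartite subgraph of $\gM$ induced by the $k$ left-vertices together with some $k$ right-vertices admits a perfect matching. Lemma~\ref{lem:3} rephrases that matching condition, via Hall's theorem, as the neighbourhood inequality~(\ref{eq:1}) on unions of the column supports $C_j$. Lemma~\ref{lem:4} then shows~(\ref{eq:1}) is equivalent to the dual inequality~(\ref{eq:2}) on unions of the row supports $R_i$. Transitivity of \emph{if and only if} across these four links closes the argument.

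There is essentially no obstacle at this stage, since all the substantive work has been front-loaded into Lemmas~\ref{lem:1}--\ref{lem:4}. The one point to keep in mind is that the row-weight hypothesis of Lemma~\ref{lem:main1} is exactly the hypothesis required by Lemma~\ref{lem:1} (whereas Lemmas~\ref{lem:2}--\ref{lem:4} hold for an arbitrary binary matrix $\bM$), so the composition is legitimate. If a fully self-contained write-up were desired, one could additionally recall from the proof of Lemma~\ref{lem:1} why its converse direction forces $\bM = \sG$ exactly — the Singleton bound prevents any row of the constructed $\bG$ from having weight below $n-k+1$, so no nonzero pattern entry can vanish — but since Lemma~\ref{lem:1} is already available, this need not be repeated.
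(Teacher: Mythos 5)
Your proposal is correct and matches the paper exactly: the paper's proof of Lemma~\ref{lem:main1} is simply the one-line observation that it follows by chaining the equivalences of Lemmas~\ref{lem:1}--\ref{lem:4}, which is precisely what you do (with the useful added remark that only Lemma~\ref{lem:1} needs the row-weight hypothesis).
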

\begin{proof} 
The proof follows from Lemma~\ref{lem:1}-\ref{lem:4}.
\end{proof}
\vskip 5pt

We present below our main result. 
\vskip 5pt 
\begin{theorem}[Main Theorem]
\label{thm:main}
Suppose $1 \leq k \leq n$ and $q > \binom{n-1}{k-1}$. Then there always exists an 
$[n,k]_q$ MDS code that has a generator matrix $\bG$ satisfying the following two conditions.
\begin{enumerate}
	\item[(C1)] Sparsest: each row of $\bG$ has weight $n - k + 1$.  
	\item[(C2)] Balanced: column weights of $\bG$ differ from each other by at most one. 
\end{enumerate}
\end{theorem}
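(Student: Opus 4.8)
The plan is to invoke Lemma~\ref{lem:main1}: it suffices to exhibit a single $k\times n$ binary matrix $\bM=(m_{i,j})$ all of whose rows have weight $n-k+1$, whose column weights differ by at most one, and which satisfies~(\ref{eq:2}). I would construct $\bM$ explicitly by sliding a fixed window around a cycle. Index the columns by $\mathbb{Z}_n=\{0,1,\dots,n-1\}$ and put $\ell=n-k+1$. For $i\in[k]$ set $c_i=\lfloor (i-1)n/k\rfloor$, and let the support of row $i$ be the arc $R_i=\{c_i,c_i+1,\dots,c_i+\ell-1\}\bmod n$; equivalently $m_{i,j}=1$ iff $(j-c_i)\bmod n\in\{0,1,\dots,n-k\}$. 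Every row then has weight exactly $\ell=n-k+1$, so~(C1) holds trivially, and only~(C2) and~(\ref{eq:2}) need checking.

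For~(\ref{eq:2}) — equivalently $|\bigcap_{i\in I}\overline{R_i}|\le k-|I|$ for every nonempty $I\subseteq[k]$ — the first step is to note that $\overline{R_i}$ is the arc of $k-1$ consecutive columns ending just before $c_i$, and hence $x\in\bigcap_{i\in I}\overline{R_i}$ exactly when every $c_i$ with $i\in I$ lies in the arc $A_x=\{x+1,\dots,x+k-1\}\bmod n$. Thus $|\bigcap_{i\in I}\overline{R_i}|$ is the number of length-$(k-1)$ arcs containing the set $S=\{c_i:i\in I\}$, which by complementation equals the number of length-$\ell$ arcs disjoint from $S$, i.e.\ $\sum_{G}(|G|-\ell+1)$ over those ``gaps'' $G$ of $S$ (maximal runs of columns avoiding $S$) with $|G|\ge\ell$. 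Since the $c_i$ are pairwise distinct — indeed $c_{i+1}-c_i\ge\lfloor n/k\rfloor\ge1$ because $n\ge k$ — we have $|S|=|I|$, so the gaps of $S$ have total size $n-|I|$. A one-line estimate then finishes: if $t\ge1$ of the gaps have size $\ge\ell$, then, using $\ell-1=n-k$, $\sum_{G}(|G|-\ell+1)\le(n-|I|)-t(n-k)\le(n-|I|)-(n-k)=k-|I|$; and if no gap has size $\ge\ell$ the sum is $0\le k-|I|$.

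For~(C2), the same rewriting shows that the weight of column $c$ equals $|\{i\in[k]:c_i\in W_c\}|$, where $W_c=\{c-\ell+1,\dots,c\}\bmod n$ is an arc of $\ell$ consecutive columns. So it suffices to know the standard equidistribution fact that the evenly spaced set $\{c_i=\lfloor(i-1)n/k\rfloor:i\in[k]\}$ meets every arc of length $w$ in either $\lfloor wk/n\rfloor$ or $\lceil wk/n\rceil$ columns; for a non-wrapping arc $[a,a+w-1]$ this count is $\lceil(a+w)k/n\rceil-\lceil ak/n\rceil$, which lies in $(wk/n-1,\,wk/n+1)$, and the wrapping case reduces to this. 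Taking $w=\ell$ yields that every column weight is $\lfloor k(n-k+1)/n\rfloor$ or $\lceil k(n-k+1)/n\rceil$, which is~(C2).

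I expect the construction itself to be the real content; once one guesses ``translate one fixed window by evenly spaced offsets,'' the three verifications are short, and it is pleasant that distinctness of the $c_i$ alone buys~(\ref{eq:2}) while the even spacing alone buys~(C2). The one place to be careful is~(\ref{eq:2}): one must resist writing $|\bigcap_{i\in I}\overline{R_i}|=k-\mathrm{span}(S)$, because when $n$ is small relative to $k$ the set $S$ can sit inside a length-$(k-1)$ arc in several ways; the gap-sum bound above is precisely what handles that regime uniformly. As sanity checks I would also verify the two extremes directly: $k=n$ gives the identity matrix and $k=1$ the all-ones row, and both are covered by the argument.
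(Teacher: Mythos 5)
Your proof is correct, but it takes a genuinely different route from the paper's. Both arguments reduce, via Lemma~\ref{lem:main1}, to exhibiting a $k\times n$ binary matrix with constant row weight $n-k+1$, balanced column weights, and property (\ref{eq:2}). The paper obtains such a matrix non-explicitly: it starts from the consecutive cyclic-shift matrix $\widetilde{\bM}$ (your construction with $c_i=i-1$), which satisfies (P1) and (P3) but not (P2), and runs a swapping algorithm that rebalances the columns one unit at a time; the bulk of the paper (Lemmas~\ref{lem:correctness}--\ref{lem:6}) is devoted to showing that a (P3)-preserving swap always exists, via a delicate analysis of minimal violating row sets $I_i$. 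You instead write down the final matrix in one shot by spreading the window offsets $c_i=\lfloor(i-1)n/k\rfloor$ evenly around the cycle, verifying (P3) by a gap-counting argument and (P2) by an equidistribution estimate. I checked both verifications: with $\ell=n-k+1$, the identity expressing $\big|\bigcap_{i\in I}\overline{R_i}\big|$ as the number of length-$\ell$ arcs disjoint from $S=\{c_i:i\in I\}$ is right, the bound $\sum_{G}(|G|-\ell+1)\le(n-|I|)-(n-k)=k-|I|$ is right (and uses only distinctness of the $c_i$, which follows from $\lfloor in/k\rfloor-\lfloor(i-1)n/k\rfloor\ge\lfloor n/k\rfloor\ge 1$), and the ceiling-difference estimate for the column weights, including the wrap-around case via complementation, is also right; the extreme cases $k=1$ and $k=n$ check out. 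Your approach is shorter, fully explicit, and isolates a structural fact the paper never states: \emph{any} family of $k$ cyclic windows of length $n-k+1$ with distinct starting points satisfies (\ref{eq:1})/(\ref{eq:2}), so balancedness comes for free by equidistributing the starting points. What the paper's algorithmic approach buys in exchange is generality --- it rebalances an arbitrary matrix satisfying (P1) and (P3), not just a cyclic one --- but for proving Theorem~\ref{thm:main} itself your construction suffices.
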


By Lemma~\ref{lem:main1}, to prove Theorem~\ref{thm:main}, we need to 
show that there always exists a $k \times n$ binary matrix $\bM$ satisfying the
following properties

\begin{enumerate}
	\item[(P1)] each row of $\bM$ has weight $n - k + 1$, 
	\item[(P2)] column weights of $\bM$ differ from each other by at most one,
	\item[(P3)] $\big| \cup_{i \in I} R_i \big| \geq n - k + |I|$, for every subset $\varnothing \neq I \subseteq [k]$, 
where $R_i$ denotes the support of row $i$ of $\bM$. 	
\end{enumerate}

We prove the existence of such a binary matrix by designing an algorithm 
(Algorithm 1) that starts from an initial
binary matrix which satisfies (P1) and (P3). In each iteration, the matrix at
hand is slightly modified so that it still satisfies (P1) and (P3) and its
column weights become more balanced. When the algorithm terminates, 
it produces a matrix that satisfies (P1), (P2), and (P3). 

Observe that it is fairly easy to construct a binary matrix that satisfies (P1) and (P2), 
using the Gale-Ryser Theorem (see Manfred~\cite{Manfred1996}).
However, (P1) and (P2) do not automatically guarantee (P3). 
Indeed, the matrix $\bP$ given below satisfies both (P1) and (P2). 
However, (P3) is violated if we choose $I = \{1,2,3\}$.  
\[
\bP = 
\begin{pmatrix}
1& 0& 0& 0& 1& 1 & 1 & 0 \\
1& 0& 0& 0& 1& 0 & 1 & 1 \\
1& 0& 0& 0& 0& 1 & 1 & 1 \\
0& 1& 1& 1& 1& 0 & 0 & 0 \\
0& 1& 1& 1& 0& 1 & 0 & 0 
\end{pmatrix}.
\]
Let $\widetilde{\bM}$ be any $k \times n$ binary matrix 
that satisfies both (P1) and (P3). %Such a matrix can be obtained
%by taking as rows the first $k$ cyclic right-shifts of the vector
%$(\underbrace{1\ 1\ \cdots \ 1}_{n-k+1} 0\ 0\ \cdots 0)$. 
For instance, we can shift the vector $(\underbrace{1\ 1\ \cdots \ 1}_{n-k+1} 0\ 0\ \cdots 0)$
$k$ times cyclically to produce $k$ rows of such a matrix as below. 
\[
\widetilde{\bM} = \begin{pmatrix}
1& 1& 1& \cdots & 1 & 0 & 0 & 0 & \cdots & 0\\
0& 1& 1& \cdots & 1 & 1 & 0 & 0 & \cdots & 0\\
0& 0& 1& \cdots & 1 & 1 & 1 & 0 & \cdots & 0\\
\vdots & \vdots & \vdots & \ddots & \vdots & \vdots & \vdots & \vdots & \ddots & \vdots\\
0& 0& 0 & \cdots & 1 & 1 & 1 & 1 & \cdots & 1\\
\end{pmatrix}.
\] 
The Algorithm~1 takes $\widetilde{\bM}$ as an input parameter.  
\begin{figure}[H]
\centering
\fbox{
\parbox{3.2in}{
\centerline{\bf Algorithm 1}
\nin{\bf Input:} $n$, $k$, $\widetilde{\bM}$;\\
\nin{\bf Initialization:} $\bM := \widetilde{\bM}$;  
\begin{algorithmic}[1]
\REPEAT
\STATE{Let $\max$ and $\min$ be the maximum and minimum weights of columns of $\bM$;}
\IF {$\max - \min \leq 1$}
   \STATE{Return $\bM$;}
\ENDIF
\STATE{Find two columns $j_{\max}$ and $j_{\min}$ that have weights $\max$ and $\min$, respectively;}
\STATE{Find a row $i_s$ satisfying $m_{i_s,j_{\max}} = 1$ and $m_{i_s,j_{\min}} = 0$ and moreover, 
if we set $m_{i_s,j_{\max}} := 0$ and $m_{i_s,j_{\min}} := 1$ then $\bM$ still satisfies (P1) and (P3);}
\STATE{Swapping: set $m_{i_s,j_{\max}} = 0$ and $m_{i_s,j_{\min}} := 1$;}
\UNTIL{$\max - \min \leq 1$;}
\end{algorithmic}
}
}
%\caption{Algorithm 1}
%\label{fig:algo1}
\end{figure}
%%%%%%%%%%%%%%%%%%%%%
\vspace{-10 pt}
%An example to demonstrate this algorithm can be found online at \cite{supplement}. 
Due to space constraint, we have prepared a separate note at \cite{supplement} with an example to demonstrate the algorithm. 
\begin{lemma}
\label{lem:correctness}
Suppose in every iteration, Algorithm~1 can always find a legitimate row described in Step~7. 
Then the algorithm terminates after finitely many iterations and returns a matrix satisfying (P1), (P2), and (P3). 
\end{lemma}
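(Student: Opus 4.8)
The plan is to combine a loop invariant (for correctness) with a strictly decreasing, nonnegative integer potential (for termination).

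First I would establish the invariant that at the start of every iteration the current matrix $\bM$ satisfies (P1) and (P3). This holds at initialization because $\widetilde{\bM}$ satisfies (P1) and (P3) by construction. For the inductive step, the swapping in Step~8 moves a single $1$ from position $(i_s,\jmax)$ to position $(i_s,\jmin)$ \emph{within the same row} $i_s$, so every row weight is unchanged and (P1) persists; and (P3) persists because Step~7 accepts $i_s$ only when the post-swap matrix still satisfies (P3), and by the hypothesis of the lemma such a row exists in every iteration. Hence (P1) and (P3) hold throughout the execution.

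Next, for termination I would use the potential $\Phi(\bM) = \sum_{j=1}^{n} |C_j|^2$, the sum of squares of the column weights. Since (P1) is invariant, $\sum_{j=1}^{n} |C_j| = k(n-k+1)$ is constant; only the distribution of this total over the columns changes. A swap decreases the weight of column $\jmax$ from $W := \max$ to $W-1$ and increases that of column $\jmin$ from $w := \min$ to $w+1$, so
\[
\Phi(\bM_{\mathrm{new}}) - \Phi(\bM_{\mathrm{old}}) = (W-1)^2 + (w+1)^2 - W^2 - w^2 = -2(W - w - 1).
\]
A swap is executed only when $\max - \min > 1$; as column weights are integers this means $W - w \ge 2$, so the difference above is at most $-2$. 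Thus $\Phi(\bM)$ is a nonnegative integer that strictly decreases at each iteration, so the algorithm performs at most $\tfrac12\Phi(\widetilde{\bM})$ iterations and then halts.

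Finally, the algorithm returns (Step~4) only when $\max-\min\le 1$, in which case the column weights of the output pairwise differ by at most one, i.e. (P2) holds; combined with the invariant, the returned matrix satisfies (P1), (P2) and (P3). Within this lemma there is essentially no obstacle: the only mildly delicate point is selecting a suitable potential, and the sum of squares of the column weights serves directly. All the genuine difficulty of the construction is isolated in the claim that Step~7 can always be carried out, which is assumed here.
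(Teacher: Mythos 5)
Your proof is correct, and its overall skeleton --- an invariant showing (P1) and (P3) persist through every swap, a termination argument, and (P2) read off from the exit condition $\max - \min \le 1$ --- matches the paper's. The one genuine difference is the termination witness. The paper tracks the spread $\De = \max - \min$ directly: each swap removes one column from the set of maximum-weight columns and one from the set of minimum-weight columns, so after at most $\lfloor n/2\rfloor$ iterations $\De$ drops by at least one, yielding the explicit bound of $(k-1)\lfloor n/2\rfloor$ iterations that the paper reuses later in its complexity discussion. You instead use the potential $\Phi(\bM)=\sum_j |C_j|^2$, which (since the total $\sum_j|C_j| = k(n-k+1)$ is conserved) strictly decreases by $2(W-w-1)\ge 2$ at every single iteration. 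Your monovariant is arguably cleaner and more self-contained --- it gives a strict per-iteration decrease and sidesteps the slightly informal ``after at most $\lfloor n/2\rfloor$ iterations'' step in the paper --- at the cost of a cruder iteration count ($\Phi(\widetilde{\bM})/2$, which can exceed $(k-1)\lfloor n/2\rfloor$). Either argument fully proves the lemma as stated, since only finiteness is required.
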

\vspace{-3pt}
\begin{proof}  
At a certain iteration, let $\De = \max - \min$. 
%Let $i$ be a legitimate row found in Step~7 of Algorithm~1. 
After swapping the two entries $m_{i_s,j_{\max}}$ and $m_{i_s,j_{\min}}$, 
the weight of column $j_{\max}$ is decreased by one whereas the weight
of column $j_{\min}$ is increased by one. Therefore, after at most
$\lfloor n/2 \rfloor$ iterations, $\De$ is decreased by at least one. 
Hence, the algorithm must terminate after finitely many iterations.
The ouput matrix obviously satisfies (P1), (P2), and (P3).    
\end{proof} 
%%%%%%%%%%%%%%%%%%%%%%
\vspace{-3pt}
\begin{lemma}
\label{lem:main2}
In every iteration of Algorithm~1, a row $i_s$ as described in Step~7 
of the algorithm can always be found. 
\end{lemma}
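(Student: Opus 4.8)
\emph{Proof plan.}
The plan is to dispose of (P1) immediately and then work only with (P3). A swap of the form prescribed in Step~7 turns an entry $m_{i_s,\jmax}=1$ into $0$ and an entry $m_{i_s,\jmin}=0$ into $1$, so it leaves the weight of row $i_s$ — and of every other row — equal to $n-k+1$; thus (P1) is preserved no matter which legitimate row is chosen. Moreover the set $C_{\jmax}\setminus C_{\jmin}$ from which $i_s$ must be picked is nonempty, since $|C_{\jmax}\setminus C_{\jmin}|\ge |C_{\jmax}|-|C_{\jmin}|=\max-\min\ge 2$. Hence everything reduces to choosing $i_s\in C_{\jmax}\setminus C_{\jmin}$ so that condition (\ref{eq:2}) still holds after the swap.

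First I would pin down which sets can be responsible for a failure of (\ref{eq:2}). Only $R_{i_s}$ changes, so $|\cup_{i\in I}R_i|$ is unchanged for every $I$ with $i_s\notin I$, and for $i_s\in I$ it drops — by exactly one — precisely when $I\cap C_{\jmax}=\{i_s\}$ (so that $\jmax$ leaves the union) and simultaneously $I\cap C_{\jmin}\neq\varnothing$ (so that $\jmin$ was already in the union and its arrival does not compensate). Call $I$ \emph{tight} if $|\cup_{i\in I}R_i|=n-k+|I|$, and call $I$ a \emph{blocker} of $i_s$ if $I$ is tight, $i_s\in I$, $I\cap C_{\jmax}=\{i_s\}$ and $I\cap C_{\jmin}\neq\varnothing$. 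The observation above shows that the swap at $i_s$ preserves (\ref{eq:2}) iff $i_s$ has no blocker, so it suffices to produce a row of $C_{\jmax}\setminus C_{\jmin}$ with no blocker (a \emph{good} row). Two structural facts would drive the rest. (i)~The map $I\mapsto|\cup_{i\in I}R_i|$ is submodular and, by (\ref{eq:2}), has surplus $\ge n-k$ on every nonempty set; hence intersecting tight sets uncross ($I,I'$ tight and $I\cap I'\neq\varnothing$ imply $I\cup I'$ and $I\cap I'$ tight). Consequently, for each $i$ the tight sets $I$ with $i\in I$ and $I\cap C_{\jmax}=\{i\}$ all contain $i$, so their union $I^{+}(i)$ is again tight with $I^{+}(i)\cap C_{\jmax}=\{i\}$, and $i$ is good exactly when this family is empty or $I^{+}(i)\cap C_{\jmin}=\varnothing$. (ii)~In the column formulation (\ref{eq:1}), the failure of the swap at $i_s$ is equivalently witnessed by a set $J$ of at most $k$ columns with $\jmax\in J$, $\jmin\notin J$ and $|\cup_{j\in J}C_j|=|J|$; a one‑line computation with (\ref{eq:1}) applied to $J\setminus\{\jmax\}$ gives $\big|C_{\jmax}\setminus\cup_{j\in J\setminus\{\jmax\}}C_j\big|\le 1$, i.e. each such $J$ exposes at most one row of $C_{\jmax}$.

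The main step is a proof by contradiction: suppose every $i\in C_{\jmax}\setminus C_{\jmin}$ is bad. Each then has the maximal witness $I^{+}(i)$, with $I^{+}(i)\cap C_{\jmax}=\{i\}$ and some $a_i\in I^{+}(i)\cap C_{\jmin}$ (necessarily $a_i\notin C_{\jmax}$). Grouping the $I^{+}(i)$ by the connected components of their intersection graph and uncrossing inside each component yields pairwise disjoint tight sets $W_1,\dots,W_r$ whose $C_{\jmax}$‑parts partition $C_{\jmax}\setminus C_{\jmin}$ and each of which meets $C_{\jmin}\setminus C_{\jmax}$; with $W=\bigcup_c W_c$ this forces $W\cap C_{\jmax}=C_{\jmax}\setminus C_{\jmin}$ and $W\cap C_{\jmin}\neq\varnothing$. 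I would then double‑count the $1$'s lying in the rows of $W$: (P1) fixes that number to $|W|(n-k+1)$, the tightness of each $W_c$ together with (\ref{eq:2}) applied to subsets of $W_c$ controls how those $1$'s can be spread over the columns, and the fact that $\jmax$ sits inside $W$ with column‑degree $|C_{\jmax}\setminus C_{\jmin}|$ while column $\jmin$ is almost empty inside $W$ should make the count impossible once $\max\ge\min+2$ is fed in. That contradiction gives a good row, and the lemma follows.

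The last step is where I expect the real difficulty to lie. Uncrossing works only for \emph{intersecting} tight sets, so when the witnesses of different bad rows are mutually disjoint, $W$ is a disjoint union of several tight blocks rather than a single tight set, and the crude surplus inequalities for $W$ are too weak on their own; the double‑counting must therefore be arranged with care — working block by block, or choosing each $I^{+}(i)$ (and perhaps even the extremal columns $\jmax,\jmin$ themselves) as economically as possible — so as to extract a strict inequality from $\max\ge\min+2$ rather than merely a non‑strict one.
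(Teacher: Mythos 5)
Your reduction is sound and matches the paper's setup: (P1) is automatic under the swap, the swap at $i_s$ breaks (P3) exactly when there is a tight set $I \ni i_s$ with $I \cap C_{\jmax} = \{i_s\}$ and $I \cap C_{\jmin} \neq \varnothing$ (your ``blocker'' is precisely the paper's violating set $\{i\}\cup I_i$), and the uncrossing of intersecting tight sets via submodularity of $I \mapsto |\cup_{i\in I}R_i|$ is correct. But the argument stops where the actual difficulty begins, and you say so yourself: the final double-counting over $W$ is only asserted to ``make the count impossible,'' and your closing paragraph concedes that the surplus inequalities for $W$ are too weak on their own. That is a genuine gap, not a detail. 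Moreover, your choice of \emph{maximal} witnesses $I^{+}(i)$ merged into connected components actively discards the information needed to finish: a single block $W_c$ may contain many bad rows of $C_{\jmax}\setminus C_{\jmin}$ yet meet $C_{\jmin}$ in only one row, and then no count of the ones inside $W_c$ (which only knows the total $|W_c|(n-k+1)$ and tightness) separates the weight of column $\jmax$ from that of column $\jmin$ by the required two units.

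What the paper proves instead---and what you would need---is a disjointness statement for \emph{minimal} witnesses: choosing each $I_i$ of minimum size, it shows (Lemma~\ref{lem:6}, via a two-case uncrossing analysis of $R_{\{1,2\}\cup I_1\cup I_2}$ and $R_{\{\ell\}\cup K}$) that $I_i \cap I_{i'} = \varnothing$ for $i \neq i'$, in addition to $\jmax \notin R_{I_i}$, $\jmin \in R_{I_i}$, and $i \notin I_{i'}$. With that in hand no global count of ones is needed: each $I_i$ contains a row $r(i)$ with a $0$ at column $\jmax$ and a $1$ at column $\jmin$; disjointness makes the $r(i)$ pairwise distinct and distinct from the bad rows, so the $2t$ rows $[t]\cup\{r(i): i\in[t]\}$ contribute equally ($t$ each) to the two columns, and every remaining row contributes at least as much to column $\jmin$ as to column $\jmax$. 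Hence the weight of $\jmax$ is at most that of $\jmin$, contradicting $\max \ge \min+2$. So the missing ideas are concrete: (a) work with minimal rather than maximal witnesses, (b) prove their pairwise disjointness, and (c) replace the global double count by the injection $i \mapsto r(i)$.
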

\vspace{-3pt}
%%%%%%%%%%%%%%%%%%%%%%%%
Since column $j_{\max}$ has a larger weight than column $j_{\min}$, there always
exists at least one row $i_s$ where $m_{i_s,j_{\max}} = 1$ and $m_{i_s,j_{\min}} = 0$.
Obviously, swapping $m_{i_s,j_{\max}}$ and $m_{i_s,j_{\min}}$ does not make 
$\bM$ violate (P1). 
The stricter criterion is that $\bM$ must still satisfy (P3) after the swap. 
We need a few more auxiliary results
before we can prove Lemma~\ref{lem:main2}. 

Suppose at a certain iteration, we choose some columns $\jmax$ and $\jmin$
that have maximum and minimum weights, respectively. Without loss of generality, 
we assume that the first $t$ rows are all the rows of $\bM$ satisfying 
the property that each of them has a one at column $\jmax$ and a zero at column $\jmin$. In other words, assume that
\[
\{i \in [k]: \ m_{i,j_{\max}} = 1 \text{ and } m_{i,j_{\min}} = 0\} = [t]. 
\]
Since $\max - \min \geq 2$, we have $t \geq 2$.
  
Suppose, for contradiction, that none of these $t$ rows satisfy the condition in 
Step~7 of Algorithm~1. Let $\bMi$, $i \in [t]$, be the matrix obtained from $\bM$ after
swapping the two entries $m_{i,\jmax}$ and $m_{i,\jmin}$. Then $\bMi$, $i \in [t]$, 
does not satisfy (P3). Since $\bM$ satisfies (P3) and the only difference between $\bMi$
and $\bM$ is the row $i$, the set of rows of $\bMi$ that violates the condition (P3) must 
contain row $i$. Therefore, for each $i \in [t]$, there exists a set $I_i \subset [k]$, 
$i \notin I_i$, such that $\{i\} \cup I_i$
is a set of rows that violates (P3) in $\bMi$. 
For our purpose, for each $i \in [t]$, we choose $I_i$ to be of minimum size among those sets that 
satisfied the aforementioned requirement.
Since for each $i\in[t]$, $|\Ri_i|=|R_i|=n-k+1$, we deduce that $I_i\neq\emptyset$. 

Let $R^{(i)}_r$ denote the support of row $r$ of $\bMi$, $i \in [t]$, $r \in [k]$. 
Note that $R_r$ denotes the support of row $r$ of $\bM$, $r \in [k]$. 
For simplicity, 
we use $\Ri_I$ to denote the union $\cup_{r \in I} \Ri_r$ for any subset $I \subseteq [k]$. 
Since $\{i\} \cup I_i$ is the set of rows of $\bMi$ that violates (P3),  for every $i \in [t]$ we have
\begin{equation}
\label{eq:6} 
|\Ri_{\{i\} \cup I_i}| \leq n - k + |\{i\} \cup I_i| - 1 = n - k + |I_i|. 
\end{equation} 
\begin{lemma}
\label{lem:5}
For all $i, i' \in [t]$, the following statements hold
\begin{enumerate}
	\item[a)] $\Ri_{i'} = \begin{cases} R_{i'}, & \text{ if } i' \neq i,\\ (R_{i'} \setminus \{\jmax\})
	\cup \{\jmin\}, & \text{ if } i' = i,\end{cases}$
	\item[b)] $\Ri_{I_i} = R_{I_i}$, \quad c) $\jmax \notin R_{I_i}$, \quad d) $\jmin \in R_{I_i}$, 
	%\item[c)] $\jmax \notin R_{I_i}$, 
	%\item[d)] $\jmin \in R_{I_i}$, 
	\item[e)] $i \notin I_{i'}$, \quad f) $|\Ri_{\{i\} \cup I_i}| = n - k + |I_i|$.
	%\item[f)] $|\Ri_{\{i\} \cup I_i}| = n - k + |I_i|$. 
\end{enumerate}
\end{lemma}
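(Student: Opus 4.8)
The plan is to prove each of the six claims a)--f) essentially in the order listed, since later items rely on earlier ones, and then to feed them into the proof of Lemma~\ref{lem:main2}.

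First, part a) is immediate from the definition of $\bMi$: the only row that changes when we pass from $\bM$ to $\bMi$ is row $i$, and in that row we turn off position $\jmax$ and turn on position $\jmin$. Since $i \notin I_i$ (established just before the lemma), every $r \in I_i$ has $r \neq i$, so $\Ri_r = R_r$ for all $r \in I_i$; taking the union over $r \in I_i$ gives part b), $\Ri_{I_i} = R_{I_i}$.

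Next I would prove c) and d) together by exploiting the minimality of $I_i$. For c): suppose $\jmax \in R_{I_i} = \Ri_{I_i}$. Then adjoining row $i$ to $I_i$ adds at most the single new element $\jmin$ to the support union (by part a), row $i$ of $\bMi$ is $(R_i \setminus \{\jmax\}) \cup \{\jmin\}$, and $R_i \setminus \{\jmax\} \subseteq R_{[k]}$ with $\jmax$ already present), so in fact one checks that $I_i$ alone already violates (P3) in $\bMi$ — more precisely, that $|\Ri_{I_i}| \le n-k+|I_i|-1$ — contradicting minimality of $I_i$ (since then some proper analysis forces a smaller violating set, or $\{i\}\cup I_i$ was not minimal in the required sense). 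The cleanest route: since $\bM$ satisfies (P3), $|R_{I_i}| \ge n-k+|I_i|$; combined with \eqref{eq:6} and $\Ri_{I_i}=R_{I_i}$ one gets $|\Ri_{\{i\}\cup I_i}| = |\Ri_{I_i}|$ or $|\Ri_{I_i}|$ differs by the single element $\jmin$, which pins down both that $\jmax\notin R_{I_i}$ and that $\jmin \in R_{I_i}$ and simultaneously yields f), namely $|\Ri_{\{i\}\cup I_i}| = n-k+|I_i|$ (equality, not just $\le$). So c), d), and f) should all fall out of a single careful counting argument combining \eqref{eq:6}, the (P3) bound on $\bM$, and part a)'s description of how row $i$ changes.

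Finally, part e): $i \notin I_{i'}$ for all $i,i' \in [t]$. Here I would argue by symmetry/contradiction. Each $I_{i'}$ was chosen of minimum size subject to $\{i'\}\cup I_{i'}$ violating (P3) in $\bMip$. If $i \in I_{i'}$ for some $i,i' \in [t]$, then in $\bMip$ row $i$ is unchanged (equals $R_i$, by a) with $i \neq i'$), and $R_i$ contains $\jmax$ (since $i \in [t]$) but not $\jmin$. By part d) applied to $i'$, $\jmin \in R_{I_{i'}}$, hence adding row $i$ to $I_{i'}$ does not help cover $\jmin$; and by part c) applied to $i'$, $\jmax \notin R_{I_{i'} \setminus \{i\}}$ unless it comes from row $i$ itself, so row $i$ is contributing the element $\jmax$ to $\Ri_{\{i'\}\cup I_{i'}}$ — but $\jmax$ is then a "wasted" coordinate that, once removed along with row $i$, shows $I_{i'} \setminus \{i\}$ is a strictly smaller set still violating (P3) in $\bMip$, contradicting minimality of $I_{i'}$. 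The main obstacle I anticipate is getting this bookkeeping exactly right: tracking which coordinates ($\jmax$, $\jmin$) are present in which support unions under the swap, and verifying the inequalities remain strict so that minimality is genuinely contradicted. Everything else is essentially unwinding definitions and applying \eqref{eq:6} together with (P3) for $\bM$.
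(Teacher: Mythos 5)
Your proof is correct and follows essentially the same route as the paper: a) and b) are definitional, and c), d), f) all come from squeezing the chain $n-k+|I_i| \ge |\Ri_{\{i\}\cup I_i}| \ge |R_{I_i}| \ge n-k+|I_i|$ obtained from (\ref{eq:6}), the containments in a)/b), and (P3) for $\bM$ (the paper runs three separate containment arguments where you run one counting argument, but the content is identical). For e), your ``wasted coordinate''/minimality detour is unnecessary and its final step is not fully justified as written; once c) is established, assuming $i\in I_{i'}$ gives $\jmax \in R_i \subseteq R_{I_{i'}}$, which directly contradicts $\jmax \notin R_{I_{i'}}$ --- this one-line argument is exactly what the paper does.
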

\begin{proof}
\nin {\bf Proof of a).}
Note that all the rows of $\bMi$ except for the row $i$ are the same as that of $\bM$.
Therefore, $\Ri_{i'} = R_{i'}$ if $i' \neq i$. 
As row $i$ of $\bMi$ is obtained from row $i$ of $\bM$ by swapping 
$m_{i,\jmax}=1$ and $m_{i,\jmin} = 0$, we deduce that 
\[
\Ri_{i} = (R_{i} \setminus \{\jmax\}) \cup \{\jmin\}. 
\] 
\nin {\bf Proof of b).} 
By definition of $I_i$, $i \notin I_i$. Therefore, using Part a), we conclude that 
$\Ri_{I_i} = R_{I_i}$.\\ 
\nin {\bf Proof of c).} 
Suppose, for contradiction, that $\jmax \in R_{I_i}$. 
Due to Part a) and b), we have
\[
\begin{split} 
\Ri_{\{i\} \cup I_i} &= \Ri_i \cup \Ri_{I_i}\\
&= ((R_i \setminus \{\jmax\}) \cup \{\jmin\}) \cup R_{I_i}\\
&= ((R_i \setminus \{\jmax\}) \cup R_{I_i}) \cup \{\jmin\}\\
&= (R_i \cup R_{I_i}) \cup \{\jmin\} \supseteq R_{\{i\} \cup I_i}. 
\end{split} 
\] 
As $\bM$ satisfies (P3), we have
\[
|\Ri_{\{i\} \cup I_i}| \geq |R_{\{i\} \cup I_i}| \geq n - k + |\{i\} \cup I_i| = n - k + |I_i| + 1. 
\]
This inequality contradicts (\ref{eq:6}).\\
\nin {\bf Proof of d).} 
Suppose, for contradiction, that $\jmin \notin R_{I_i}$. 
Then by Part a) and b) we have
\[
\Ri_{\{i\} \cup I_i} = ((R_i \setminus \{\jmax\}) \cup \{\jmin\}) \cup R_{I_i} \supseteq \{\jmin\} \cup R_{I_i}. 
\]
Therefore, using the fact that $\bM$ satisfies (P3), we deduce that
\[
|\Ri_{\{i\} \cup I_i}| \geq |\{\jmin\} \cup R_{I_i}| = 1 + |R_{I_i}| \geq 1 + n - k + |I_i|. 
\]
This inequality contradicts (\ref{eq:6}). \\
\nin {\bf Proof of e).} 
Note that $\jmax \in R_i$. However, by Part c), $\jmax \notin R_{I_{i'}}$. 
Hence, $i \notin I_{i'}$. \\
\nin {\bf Proof of f).} 
Using Part a) we have 
\begin{equation} 
\label{eq:7}
|\Ri_{\{i\} \cup I_i}| = |\Ri_i \cup \Ri_{I_i}| = |\Ri_i \cup R_{I_i}| \geq |R_{I_i}| 
\geq n - k + |I_i|,
\end{equation} 
where the last inequality comes from the fact that $\bM$ satisfies (P3). 
Combining (\ref{eq:6}) and (\ref{eq:7}), the proof of f) follows.  
%we deduce that
%\[
%|\Ri_{\{i\} \cup I_i}| = n - k + |I_i|. 
%\]
\end{proof} 

\vskip 5pt 
\begin{lemma}
\label{lem:6}
For all $i, i' \in [t]$, $i \neq i'$, it holds that $I_i \cap I_{i'} = \varnothing$. 
\end{lemma}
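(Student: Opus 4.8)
\emph{Overview.} The plan is to argue by contradiction: I would assume $I_i\cap I_{i'}\ne\varnothing$ for some $i\ne i'$ in $[t]$ and then exhibit a nonempty $T\subseteq[k]$ with $|R_T|<n-k+|T|$ (writing $R_T:=\cup_{r\in T}R_r$), contradicting the assumption that $\bM$ satisfies (P3). The first step, which I expect to be the crux, is to read off from Lemma~\ref{lem:5} the following \emph{absorption property}, stated for the index $i$ (the statement for $i'$ is identical):
\[
R_i\setminus\{\jmax\}\subseteq R_{I_i}\qquad\text{and}\qquad |R_{I_i}|=n-k+|I_i|.
\]
To prove it, Lemma~\ref{lem:5}(a)--(d) yield $\Ri_{\{i\}\cup I_i}=(R_i\cup R_{I_i})\setminus\{\jmax\}$, so Lemma~\ref{lem:5}(f) forces $|R_{\{i\}\cup I_i}|=n-k+|I_i|+1$; together with $|R_{I_i}|\ge n-k+|I_i|$ (from (P3)) and the fact that $\jmax\in R_i\setminus R_{I_i}$ excludes $R_i\subseteq R_{I_i}$, the only possibility left is $|R_{I_i}|=n-k+|I_i|$ and $R_i\setminus R_{I_i}=\{\jmax\}$.

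\emph{Uncrossing.} Next I would set $U:=I_i\cup I_{i'}$ and use that the set function $T\mapsto|R_T|-(n-k)-|T|$ on nonempty subsets of $[k]$ is submodular (a coverage function minus a modular term) and nonnegative by (P3). Applying the submodular inequality to $I_i$ and $I_{i'}$ --- legitimate because their intersection is nonempty by assumption --- and substituting $|R_{I_i}|=n-k+|I_i|$ and $|R_{I_{i'}}|=n-k+|I_{i'}|$ from the absorption step, both terms on the right-hand side are forced to vanish; in particular $|R_U|=n-k+|U|$.

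\emph{The contradiction.} Finally I would evaluate $R_{\{i,i'\}\cup U}=R_i\cup R_{i'}\cup R_U$ in two ways. Since $\jmax\notin R_{I_i}\cup R_{I_{i'}}=R_U$ by Lemma~\ref{lem:5}(c), while $R_i\setminus\{\jmax\}\subseteq R_{I_i}\subseteq R_U$ and $R_{i'}\setminus\{\jmax\}\subseteq R_{I_{i'}}\subseteq R_U$ by absorption (and $\jmax\in R_i$ since $i\in[t]$), one gets $R_i\cup R_{i'}\cup R_U=R_U\cup\{\jmax\}$, hence $|R_{\{i,i'\}\cup U}|=n-k+|U|+1$. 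On the other hand $i,i'\notin U$ --- because $i\notin I_i$ by construction and $i\notin I_{i'}$ by Lemma~\ref{lem:5}(e), and likewise for $i'$ --- and $i\ne i'$, so $\{i,i'\}\cup U$ is a nonempty subset of $[k]$ of size $|U|+2$, for which (P3) requires $|R_{\{i,i'\}\cup U}|\ge n-k+|U|+2$. This contradiction establishes $I_i\cap I_{i'}=\varnothing$.

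I expect the absorption property to be the main obstacle: one has to recognize that a size-minimal witness $I_i$ of the failure of (P3) in $\bMi$ is so tight that its union of supports already contains all of $R_i$ except the single column $\jmax$. Once that is in hand the rest is a routine uncrossing-and-counting argument, where submodularity makes the enlarged set $\{i,i'\}\cup U$ tight enough to demand two columns outside $R_U$, while absorption shows rows $i$ and $i'$ jointly supply only the one shared column $\jmax$.
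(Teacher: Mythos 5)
Your proof is correct, and it takes a genuinely different route from the paper's. The paper fixes a common element $\ell\in I_1\cap I_2$, sets $K=(I_1\setminus\{\ell\})\cap(I_2\setminus\{\ell\})$, and splits into two cases ($K=\varnothing$ and $K\neq\varnothing$), in each case deriving a violation of (P3) — for $\{1,2\}\cup I_1\cup I_2$ in the first case and for $\{\ell\}\cup K$ in the second — by hand, via three auxiliary claims about the sets $\Ri_{\{i\}\cup I_i\setminus\{\ell\}}$. Your argument replaces all of this with two cleaner observations. First, your absorption property checks out: from Lemma~\ref{lem:5} a)--d) one gets $\Ri_{\{i\}\cup I_i}=(R_i\cup R_{I_i})\setminus\{\jmax\}$, so Lemma~\ref{lem:5} f) gives $|R_i\cup R_{I_i}|=n-k+|I_i|+1$; combined with (P3) applied to $I_i$ (nonempty, as the paper notes) this pins down $R_i\setminus R_{I_i}=\{\jmax\}$ and $|R_{I_i}|=n-k+|I_i|$, i.e., each $I_i$ is itself a tight set for (P3). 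The paper never isolates this tightness of $I_i$; its Claim~1 records tightness of the larger sets $\{i\}\cup I_i\setminus\{\ell\}$ instead. Second, the standard uncrossing of tight sets for the submodular function $T\mapsto|R_T|-|T|$ makes $U=I_i\cup I_{i'}$ tight in one stroke — valid precisely because the intersection is assumed nonempty — after which $\{i,i'\}\cup U$ violates (P3), since rows $i$ and $i'$ contribute only the single new column $\jmax$ to $R_U$ and $i,i'\notin U$ by Lemma~\ref{lem:5} e). This eliminates the case distinction entirely and better explains \emph{why} the lemma holds (the witness sets are tight, and tight sets uncross); what the paper's longer computation buys is only self-containedness, avoiding any appeal to submodularity.
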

\begin{proof}
Without loss of generality, we prove that $I_1 \cap I_2 = \varnothing$. 
Suppose, for contradiction, that there exists $\ell \in I_1 \cap I_2$. 
We first present three claims, which are used later in this proof. \\
\nin {\bf Claim 1:}
For $i = 1, 2$ we have
\begin{equation} 
\label{eq:8}
|\Ri_{\{i\} \cup I_i \setminus \{\ell\}}| = n - k + |I_i|,
\end{equation} 
and
\begin{equation} 
\label{eq:9}
R_{\ell} = \Ri_{\ell} \seq \Ri_{\{i\} \cup I_i \setminus \{\ell\}}. 
\end{equation} 
\begin{proof} [Proof of Claim 1]
Indeed, because of the minimality of $I_i$, the set $\{i\} \cup (I_i \setminus \ell)$
does not violate (P3) in $\bMi$. Therefore, 
\[
|\Ri_{\{i\} \cup I_i \setminus \{\ell\}}| \geq n - k + |\{i\}\cup (I_i \setminus \ell)|
= n - k + |I_i|. 
\]
On the other hand, 
$
\Ri_{\{i\} \cup I_i \setminus \{\ell\}} \seq \Ri_{\{i\} \cup I_i}$,
which also has cardinality $n - k + |I_i|$, due to Lemma~\ref{lem:5} f). 
Therefore, 
\[
\Ri_{\{i\} \cup I_i \setminus \{\ell\}} = \Ri_{\{i\} \cup I_i},
\]
and 
\[
|\Ri_{\{i\} \cup I_i \setminus \{\ell\}}| = |\Ri_{\{i\} \cup I_i}| = n - k + |I_i|.
\]
We also deduce that  
$
\Ri_{\ell} \seq \Ri_{\{i\} \cup I_i \setminus \{\ell\}}$. 
By Lemma~\ref{lem:5} a), we have $R_{\ell} = \Ri_{\ell}$. 
Thus we complete the proof of Claim~1. 
\end{proof} 

\nin {\bf Claim 2:} Let $K = (I_1\setminus \{\ell\}) \cap (I_2 \setminus \{\ell\})$. 
Then for $i = 1, 2$, the following holds 
\begin{equation} 
\label{eq:14}
|R_{\{i\} \cup I_i \setminus \{\ell\}} \setminus R_{\{\ell\} \cup K}  \setminus \{\jmax\}|
\leq |\Ri_{\{i\} \cup I_i \setminus \{\ell\}}| - |R_{\{\ell\} \cup K}|.  
\end{equation} 
\begin{proof}[Proof of Claim 2]
Using Lemma~\ref{lem:5} a) and b), we have
\[
\begin{split} 
R_{\{i\} \cup I_i \setminus \{\ell\}} &= R_i \cup R_{I_i \setminus \{\ell\}}\\
&= R_i \cup \Ri_{I_i \setminus \{\ell\}}\\
&= ((\Ri_i \cup \{\jmax\}) \setminus \{\jmin\}) \cup \Ri_{I_i \setminus \{\ell\}}\\
&\seq \{\jmax\} \cup  (\Ri_i \cup \Ri_{I_i \setminus \{\ell\}})\\
&= \{\jmax\} \cup \Ri_{\{i\} \cup I_i \setminus \{\ell\}}. 
\end{split} 
\]
Therefore, 
$
R_{\{i\} \cup I_i \setminus \{\ell\}} \setminus \{\jmax\} 
\seq \Ri_{\{i\} \cup I_i \setminus \{\ell\}}$. 
Hence 
\[
\begin{split}
|(R_{\{i\} \cup I_i \setminus \{\ell\}}  \setminus  R_{\{\ell\} \cup K}) \setminus  \{\jmax\}|
&= |(R_{\{i\} \cup I_i \setminus \{\ell\}}  \setminus \{\jmax\}) \\
&\quad \setminus  R_{\{\ell\} \cup K}|\\
&\leq |\Ri_{\{i\} \cup I_i \setminus \{\ell\}} \setminus R_{\{\ell\} \cup K}| \\
&=|\Ri_{\{i\} \cup I_i \setminus \{\ell\}}| - |R_{\{\ell\} \cup K}|,
\end{split} 
\]
where the last equality can be explained by the fact that $R_{\{\ell\} \cup K} 
\seq \Ri_{\{i\} \cup I_i \setminus \{\ell\}}$. 
Indeed, we have 
\[
R_{\ell} \seq \Ri_{\{i\} \cup I_i \setminus \{\ell\}}
\]
due to (\ref{eq:9}). Moreover, $K \seq I_i \setminus \{\ell\}$. 
Hence the aforementioned inclusion holds. 
We complete the proof of Claim~2. 
\end{proof} 

\nin {\bf Claim 3:} If $I_1 \cap I_2 = \{\ell\}$ then for $i = 1, 2$, we have
\begin{equation} 
\label{eq:10}
|R_{\{i\} \cup I_i \setminus \{\ell\}} \setminus R_{\ell} 
\setminus \{\jmax\}| \leq |I_i| - 1.  
\end{equation} 
\begin{proof}[Proof of Claim~3]
Applying (\ref{eq:14}) with $K = \varnothing$, we obtain
\[
\begin{split}
|R_{\{i\} \cup I_i \setminus \{\ell\}}  \setminus  R_{\ell} \setminus  \{\jmax\}|
&\leq |\Ri_{\{i\} \cup I_i \setminus \{\ell\}}| - |R_{\ell}|\\
&\stackrel{(\ref{eq:8})}{=} (n - k + |I_i|) - (n - k + 1)\\
&= |I_i| - 1. 
\end{split} 
\]
We complete the proof of Claim~3. 
\end{proof} 

The remaining of the proof of Lemma~\ref{lem:6} is divided into two cases. 
Our goal is to obtain contradictions in both cases.
\nin {\bf Case 1:} $I_1 \cap I_2 = \{\ell\}$. \\
We aim to show that 
\begin{equation} 
\label{eq:11}
|R_{\{1,2\}  \cup I_1 \cup I_2}| < n - k + |\{1,2\}  \cup I_1 \cup I_2|. 
\end{equation} 
This is a contradiction of our assumption that $\bM$ satisfies (P3). 
Firstly, since $I_1 \cap I_2 = \{\ell\}$, we have
\begin{equation} 
\label{eq:12}
n - k + |\{1,2\}  \cup I_1 \cup I_2| = n - k + |I_1| + |I_2| + 1. 
\end{equation} 
Secondly, we consider
\[
\begin{split} 
R_{\{1,2\}  \cup I_1 \cup I_2}
&= R_{\ell} \cup (R_{\{1\} \cup I_1 \setminus \{\ell\}} \setminus R_{\ell}) 
\cup (R_{\{2\} \cup I_2 \setminus \{\ell\}} \setminus R_{\ell})\\
&= R_{\ell} \cup \{\jmax\} 
\cup ((R_{\{1\} \cup I_1 \setminus \{\ell\}} \setminus R_{\ell}) \setminus \{\jmax\})\\ 
&\quad \cup ((R_{\{2\} \cup I_2 \setminus \{\ell\}} \setminus R_{\ell}) \setminus \{\jmax\}).
\end{split} 
\]
Therefore, 
\begin{equation} 
\label{eq:13}
\begin{split} 
|R_{\{1,2\}  \cup I_1 \cup I_2}|
&\leq |R_{\ell}| + 1 
+ |(R_{\{1\} \cup I_1 \setminus \{\ell\}} \setminus R_{\ell}) \setminus \{\jmax\}|\\ 
&\quad + |(R_{\{2\} \cup I_2 \setminus \{\ell\}} \setminus R_{\ell}) \setminus \{\jmax\}|\\
&\stackrel{(\ref{eq:10})}{\leq} (n - k + 1) + 1 + (|I_1| - 1) + (|I_2| - 1)\\
&= n - k + |I_1| + |I_2|.  
\end{split} 
\end{equation} 
Combining (\ref{eq:12}) and (\ref{eq:13}), we obtain (\ref{eq:11}). 
We complete the analysis of Case 1. 

\nin {\bf Case 2:} $(I_1\setminus \{\ell\}) \cap (I_2 \setminus \{\ell\}) = K \neq \varnothing$. \\
We aim to prove that
\begin{equation}
\label{eq:15}
R_{\ell} \seq R_K, 
\end{equation} 
and
\begin{equation}
\label{eq:16}
|R_K| = n - k + |K|. 
\end{equation} 
If both (\ref{eq:15}) and (\ref{eq:16}) hold then 
\[
|R_{\{\ell\} \cup K}| = |R_K| = n - k + |K| < n - k + |\{\ell\} \cup K|, 
\]
which contradicts our assumption that $\bM$ satisfies (P3). 

Let $\de = |R_{\ell} \setminus R_K| \geq 0$.
As $\bM$ satisfies (P3), let
\[ 
|R_K| = n - k + |K| + \vep,
\]
where $\vep \geq 0$. Then 
\begin{equation} 
\label{eq:17}
|R_{\{\ell\} \cup K}| = |R_K| + |R_{\ell} \setminus R_K| = n - k + |K| + \vep + \de. 
\end{equation} 
We have
\[
\begin{split}
R_{\{1,2\}  \cup I_1 \cup I_2}
&= R_{\{\ell\} \cup K} \cup (R_{\{1\} \cup I_1 \setminus \{\ell\}} 
\setminus R_{\{\ell\} \cup K})\\ 
&\quad \cup (R_{\{2\} \cup I_2 \setminus \{\ell\}} \setminus R_{\{\ell\} \cup K})\\
&=R_{\{\ell\} \cup K} \cup \{\jmax\} \\
&\quad \cup ((R_{\{1\} \cup I_1 \setminus \{\ell\}} \setminus R_{\{\ell\} \cup K}) 
\setminus \{\jmax\})\\ 
&\quad \cup ((R_{\{2\} \cup I_2 \setminus \{\ell\}} \setminus R_{\{\ell\} \cup K}) 
\setminus \{\jmax\}).
\end{split} 
\]
Therefore
\begin{equation} 
\label{eq:18}
\begin{split}
|R_{\{1,2\}  \cup I_1 \cup I_2}|
&\leq |R_{\{\ell\} \cup K} \cup \{\jmax\}| \\
&\quad + |(R_{\{1\} \cup I_1 \setminus \{\ell\}} \setminus R_{\{\ell\} \cup K}) 
\setminus \{\jmax\}|\\ 
&\quad + |(R_{\{2\} \cup I_2 \setminus \{\ell\}} \setminus R_{\{\ell\} \cup K}) 
\setminus \{\jmax\}|\\
&\stackrel{(\ref{eq:14})}{\leq} |R_{\{\ell\} \cup K}| + 1\\
&\quad + |\Ro_{\{1\} \cup I_1 \setminus \{\ell\}}| - |R_{\{\ell\} \cup K}|\\
&\quad + |\Rt_{\{2\} \cup I_2 \setminus \{\ell\}}| - |R_{\{\ell\} \cup K}|\\
&= |\Ro_{\{1\} \cup I_1 \setminus \{\ell\}}| 
+ |\Rt_{\{2\} \cup I_2 \setminus \{\ell\}}| \\
&\quad - |R_{\{\ell\} \cup K}| + 1\\
&\stackrel{(\ref{eq:8})(\ref{eq:17})}{=} 
(n - k + |I_1|) + (n - k + |I_2|) \\
&\quad - (n - k + |K| + \vep + \de) + 1\\
&\leq n - k + |I_1| + |I_2| - |K| + 1. 
\end{split} 
\end{equation} 
Moreover, as $I_1 \cap I_2 = \{\ell\} \cup K$, we have
\begin{equation} 
\label{eq:21}
\begin{split} 
|\{1,2\}\cup I_1 \cup I_2| &= 2 + |I_1| + |I_2| - |\{\ell\} \cup K|\\
&= |I_1| + |I_2| - |K| + 1. 
\end{split} 
\end{equation} 
As $\bM$ satisfies (P3), from (\ref{eq:18}) and (\ref{eq:21}), we conclude that
\[
|R_{\{1,2\}  \cup I_1 \cup I_2}| = n - k + |I_1| + |I_2| - |K| + 1.
\]
Therefore, all of the inequalities in (\ref{eq:18}) must be equalities. 
In particular, the last equality forces $\vep = 0$ and $\de = 0$. 
As $\de = 0$ implies that (\ref{eq:15}) holds and $\vep = 0$ implies
that (\ref{eq:16}) holds, we complete the analysis of Case~2.

In any cases, we always derive a contradiction. Therefore, our 
assumption that there exists some $\ell \in I_1 \cap I_2$ is wrong. 
Hence $I_1 \cap I_2 = \varnothing$. It follows immediately 
that $I_i \cap I_{i'} = \varnothing$ for every $i, i' \in [t]$, 
$i \neq i'$. %The proof of Lemma~\ref{lem:6} follows. 
\end{proof} 
\vskip 5pt 

We are now in position to prove Lemma~\ref{lem:main2}, 
which in turn implies Theorem~\ref{thm:main}. 

\vskip 5pt 
\begin{proof}[Proof of Lemma~\ref{lem:main2}]
Recall that we assume that
\begin{equation} 
\label{eq:19}
\{i \in [k]: \ m_{i,j_{\max}} = 1 \text{ and } m_{i,j_{\min}} = 0\} = [t]. 
\end{equation} 
Moreover, we suppose, for contradiction, that none of these $t$ rows 
satisfy the second condition in Step~7 of Algorithm~1.
As shown by Lemma~\ref{lem:5} c), d), e), and Lemma~\ref{lem:6}, 
we can associate to each $i \in [t]$ a subset $I_i \subset [k]$ satisfying
the following
\begin{enumerate}
	\item[(S1)] $i \notin I_{i'}$, for all $i, i' \in [t]$, 
	\item[(S2)] $\jmax \notin R_{I_i}$, for all $i \in [t]$, 
	\item[(S3)] $\jmin \in R_{I_i}$, for all $i \in [t]$, 
	\item[(S4)] $I_i \cap I_{i'} = \varnothing$, for all $i, i' \in [t]$, $i \neq i'$. 
\end{enumerate}
Due to (S2) and (S3), for each $i \in [t]$, there exists a row $r(i) \in I_i$
that has a zero at column $\jmax$ and a one at column $\jmin$. 
By (S1) and (S4), $r(i) \neq i'$ for all $i, i' \in [t]$ and $r(i) \neq r(i')$ 
whenever $i \neq i'$.
\begin{figure}[H]
\centering
\scalebox{1} % Change this value to rescale the drawing.
{
\begin{pspicture}(0,-1.7304556)(8.025312,1.7304556)
\psline[linewidth=0.02cm](1.8778125,1.5099624)(3.0778124,1.5099624)
\psline[linewidth=0.02cm](3.4578125,1.5099624)(4.6978126,1.5099624)
\usefont{T1}{ptm}{m}{n}
\rput(3.3092186,1.4999624){$1$}
\usefont{T1}{ptm}{m}{n}
\rput(4.909219,1.4999624){$0$}
\usefont{T1}{ptm}{m}{n}
\rput(0.9476563,1.4999624){row $1$}
\psline[linewidth=0.02cm](5.0778127,1.5099624)(6.3178124,1.5099624)
\psline[linewidth=0.02cm](1.8778125,1.0899625)(3.0778124,1.0899625)
\psline[linewidth=0.02cm](3.4578125,1.0899625)(4.6978126,1.0899625)
\usefont{T1}{ptm}{m}{n}
\rput(3.3092186,1.0799625){$1$}
\usefont{T1}{ptm}{m}{n}
\rput(4.909219,1.0799625){$0$}
\usefont{T1}{ptm}{m}{n}
\rput(0.9476563,1.0799625){row $2$}
\psline[linewidth=0.02cm](5.0778127,1.0899625)(6.3178124,1.0899625)
\psline[linewidth=0.02cm](1.8778125,0.28996247)(3.0778124,0.28996247)
\psline[linewidth=0.02cm](3.4578125,0.28996247)(4.6978126,0.28996247)
\usefont{T1}{ptm}{m}{n}
\rput(3.3092186,0.27996245){$1$}
\usefont{T1}{ptm}{m}{n}
\rput(4.909219,0.27996245){$0$}
\usefont{T1}{ptm}{m}{n}
\rput(0.96765625,0.27996245){row $t$}
\psline[linewidth=0.02cm](5.0778127,0.28996247)(6.3178124,0.28996247)
\psline[linewidth=0.02cm](1.8778125,-0.29003754)(3.0778124,-0.29003754)
\psline[linewidth=0.02cm](3.4578125,-0.29003754)(4.6978126,-0.29003754)
\usefont{T1}{ptm}{m}{n}
\rput(3.3092186,-0.30003753){$0$}
\usefont{T1}{ptm}{m}{n}
\rput(4.909219,-0.30003753){$1$}
\usefont{T1}{ptm}{m}{n}
\rput(0.76765627,-0.30003753){row $r(1)$}
\psline[linewidth=0.02cm](5.0778127,-0.29003754)(6.3178124,-0.29003754)
\psline[linewidth=0.02cm](1.8778125,-0.7100375)(3.0778124,-0.7100375)
\psline[linewidth=0.02cm](3.4578125,-0.7100375)(4.6978126,-0.7100375)
\usefont{T1}{ptm}{m}{n}
\rput(3.3092186,-0.7200375){$0$}
\usefont{T1}{ptm}{m}{n}
\rput(4.909219,-0.7200375){$1$}
\usefont{T1}{ptm}{m}{n}
\rput(0.76765627,-0.7200375){row $r(2)$}
\psline[linewidth=0.02cm](5.0778127,-0.7100375)(6.3178124,-0.7100375)
\psline[linewidth=0.02cm](1.8778125,-1.4900376)(3.0778124,-1.4900376)
\psline[linewidth=0.02cm](3.4578125,-1.4900376)(4.6978126,-1.4900376)
\usefont{T1}{ptm}{m}{n}
\rput(3.3092186,-1.5000376){$0$}
\usefont{T1}{ptm}{m}{n}
\rput(4.909219,-1.5000376){$1$}
\usefont{T1}{ptm}{m}{n}
\rput(0.76765627,-1.5000376){row $r(t)$}
\psline[linewidth=0.02cm](5.0778127,-1.4900376)(6.3178124,-1.4900376)
\psbezier[linewidth=0.04](6.6986747,0.9254529)(6.417402,0.94966185)(6.695133,1.7104558)(6.433669,1.6419834)
\psbezier[linewidth=0.04](6.6986465,0.931634)(6.4079576,0.89401394)(6.7023554,0.10954438)(6.4498897,0.1956179)
\psbezier[linewidth=0.04](6.6986747,-0.8945471)(6.417402,-0.8703382)(6.695133,-0.109544225)(6.433669,-0.1780166)
\psbezier[linewidth=0.04](6.6986465,-0.888366)(6.4079576,-0.92598605)(6.7023554,-1.7104557)(6.4498897,-1.6243821)
\usefont{T1}{ptm}{m}{n}
\rput(7.352031,0.93996245){$t$ rows}
\usefont{T1}{ptm}{m}{n}
\rput(7.352031,-0.9000375){$t$ rows}
\psdots[dotsize=0.08](3.2978125,0.8299625)
\psdots[dotsize=0.08](3.2978125,0.68996245)
\psdots[dotsize=0.08](3.2978125,0.5299625)
\psdots[dotsize=0.08](4.9178123,0.8299625)
\psdots[dotsize=0.08](4.9178123,0.68996245)
\psdots[dotsize=0.08](4.9178123,0.5299625)
\psdots[dotsize=0.08](3.3178124,-0.95003754)
\psdots[dotsize=0.08](3.3178124,-1.0900376)
\psdots[dotsize=0.08](3.3178124,-1.2500376)
\psdots[dotsize=0.08](4.9178123,-0.95003754)
\psdots[dotsize=0.08](4.9178123,-1.0900376)
\psdots[dotsize=0.08](4.9178123,-1.2500376)
\end{pspicture} 
}

\end{figure}
Along the rows in the set $[t] \cup \{r(i): i \in [t]\}$, the weights of the two columns
$\jmax$ and $\jmin$ are the same (equal to $t$). 
The other rows of $\bM$, because of (\ref{eq:19}), must contribute at least as much 
to the weight of column $\jmin$ as to the weight of column $\jmax$. 
Therefore, in total, the weight of column $\jmax$ is not larger than the weight
of column $\jmin$ of $\bM$. This conclusion contradicts the fact that 
$\max \geq \min + 2$. 
\end{proof} 
\vskip 5pt

We now discuss the complexity of Algorithm~1. 
In the initial matrix $\widetilde{\bM}$, the difference between the maximum
and the minimum column weights is at most $k - 1$. 
Therefore, according to the proof of Lemma~\ref{lem:correctness}, the repeat loop
finishes after at most $(k-1)\lfloor \frac{n}{2}\rfloor$ iterations. 
It is obvious that all steps in each iteration can be done in polynomial time in $n$ 
and $k$, except for Step~7. It is not straightforward that the verification of (P3) for a 
given $k \times n$ matrix can be done in polynomial time. However, it can be 
shown that by considering a special one-source $k$-sink network 
(of linear size in $k$ and $n$) associated 
with each matrix, (P3) is equivalent to the condition that in this network, 
the minimum capacity of a cut between the source and any sink is at least $n$. 
On any network, this condition can be verified in polynomial time using 
the famous network flow algorithm (see, for instance~\cite{AhujaMagnantiOrlin}).  
Therefore, Algorithm~1 runs in polynomial time in $k$ and $n$. 
We omit the proof due to lack of space. 
Interested reader can find the proof online at \cite{supplement}. 

\section{Acknowledgment}
The first author thanks Yeow Meng Chee for informing him of
the Gale-Ryser Theorem. 
\bibliographystyle{IEEEtran}
\bibliography{Sparse-Balanced-GM-MDS}

\end{document}